\documentclass[11pt]{article}

\usepackage[margin=1.1in]{geometry}
\usepackage{amsmath,amssymb,amsthm,mathtools}
\usepackage{enumitem}
\usepackage{microtype}
\usepackage{thmtools}
\usepackage{bbm}
\usepackage{xcolor}
\usepackage[colorlinks=true,linkcolor=blue,citecolor=blue,urlcolor=blue]{hyperref}
\usepackage[backend=biber,style=alphabetic,maxnames=99,maxalphanames=99,backref=true,natbib=true]{biblatex}
\usepackage[capitalise]{cleveref}

\newcommand{\R}{\mathbb{R}}
\newcommand{\one}{\mathbbm{1}}
\newcommand{\diag}{\operatorname{diag}}

\newcommand{\Ent}{\operatorname{\Phi}}
\newcommand{\abs}[1]{\left\lvert #1\right\rvert}
\newcommand{\norm}[1]{\left\lVert #1\right\rVert}

\newcommand{\olPi}{\overline{\Pi}}

\newtheorem{theorem}{Theorem}[section]
\newtheorem{lemma}[theorem]{Lemma}

\makeatletter
\@ifundefined{theHtheorem}
  {}
  {}
\@ifundefined{theHlemma}
  {}
  {}
\@ifundefined{theHcorollary}
  {}
  {}
\providecommand*{\toclevel@theorem}{1}
\providecommand*{\toclevel@lemma}{1}
\providecommand*{\toclevel@corollary}{1}
\makeatother

\title{A Tight Bound on Localization of Electrical Flows}
\author{Ori Gurel-Gurevich\footnote{ Einstein Institute of Mathematics, Hebrew University of Jerusalem, \texttt{
Ori.Gurel-Gurevich@mail.huji.ac.il}} \quad Asaf Nachmias\footnote{Department of Mathematics, Tel Aviv University, \texttt{asafnach@tauex.tau.ac.il}} \quad Sushant Sachdeva\footnote{Department of Computer Science, University of Toronto. \texttt{sachdeva@cs.toronto.edu}}}
\date{}

\begin{document}
\maketitle

\begin{abstract}

We prove that for any unweighted graph on $n$ vertices the $\ell_1$ norm of a unit electric current between the endpoints of a random edge is at most $2\log n.$ Furthermore, we show that on any weighted graph the spectral norm of the entry-wise absolute value of the symmetric transfer-current matrix is at most $2\log n.$
This bound is tight up to constants and improves the $O(\log^2 n)$ bound from [Schild-Rao-Srivastava, SODA '18].
The initial proofs were generated by OpenAI’s ChatGPT 5.5 Pro; the authors have verified and rewritten them to enhance readability and provide additional context.

\end{abstract}

\section{Introduction}

Electrical flows are fundamental in modern graph algorithms. They minimize electrical energy, which is an $\ell_2^2$ quantity, and can be efficiently computed in almost-linear time using the fast Laplacian-system solvers of~\citet{ST04}. Many algorithmic applications, however, require controlling
localization quantities closer to $\ell_1$. Thus, it is natural to ask how far the unit electric flow between two nearby vertices travels.

\citet{SRS18} formulated a clean version of this question. For an unweighted graph with $n$ vertices and $m$ edges and two directed edges $e$ and $f$, let $i_e(f)$ denote the electric current flowing on $f$ when a unit current flows across the endpoints of $e$. The $\ell_2$ norm squared of the current vector $i_e$ is the effective electric resistance between the endpoints of $e$, hence $\| i_e \|_2 \leq 1$ and Cauchy-Schwarz gives only $\|i_e\|_1 \leq \sqrt{m}$. This bound can be attained (see \cite[Example 1.1]{SRS18}) for a small number of edges, but not when averaging over $e$. Indeed, in \cite[Theorem 1.4]{SRS18} it is proved that $m^{-1} \sum_{e} \|i_e\|_1 = O(\log^2 n)$. Our main contribution is improving this bound to $O(\log n)$.

\begin{restatable}{theorem}{mainunweighted}\label{thm:main_unweighted} 
        For every unweighted undirected graph $G$ with $m$ edges and $n$ vertices, we have
$$ m^{-1} \sum_{e \in E} \|i_e\|_1 \leq  2 \log n \, .$$
\end{restatable}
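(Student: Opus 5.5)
We reduce to a spectral statement and then bound the spectral norm. Orient each edge arbitrarily and let $B$ be the $m\times n$ signed incidence matrix, $L=B^\top B$ the Laplacian, and $\Pi = BL^{+}B^\top$ the transfer-current matrix, so that $i_e(f)=\Pi(f,e)$. Then $\sum_e \|i_e\|_1=\sum_{e,f}|\Pi(e,f)| = \one^\top |\Pi|\,\one$, where $|\Pi|$ is the entrywise absolute value. Since $\|\one\|_2^2=m$, this is at most $m\,\norm{|\Pi|}$. Hence \cref{thm:main_unweighted} follows from $\norm{|\Pi|}\le 2\log n$. This is also the weighted statement in the abstract, with $\Pi=W^{1/2}BL^+B^\top W^{1/2}$. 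We prove that stronger bound directly, since the weighted setting is more flexible and closed under the operations below.

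Because $|\Pi|$ is symmetric and entrywise nonnegative, its spectral norm equals $\max_{x\ge 0,\ \|x\|=1} x^\top|\Pi|x$, and by Perron--Frobenius it suffices to exhibit a Schur-test vector. Concretely, we would find positive weights $y_e$ with $\sum_f |\Pi(e,f)|\,y_f \le 2\log n\cdot y_e$ for all $e$. To compute $\sum_f|\Pi(e,f)|y_f$ we use the coarea identity for the potential $\varphi_e=L^{+}b_e$. For $f=(u,v)$ we have $|\Pi(e,f)| = \sqrt{w_f}\,|\varphi_e(u)-\varphi_e(v)|\sqrt{w_e}$, which equals $\sqrt{w_ew_f}\int \one[f\in\partial\{\varphi_e>s\}]\,ds$. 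Hence $\sum_f|\Pi(e,f)|y_f=\sqrt{w_e}\int y'(\partial S_s)\,ds$, where $S_s=\{\varphi_e>s\}$ are the level sets and $y'_f=\sqrt{w_f}y_f$. The unit current crosses every cut $\partial S_s$ with $s$ between the endpoint potentials, and no current crosses cuts outside that range. So the task is to compare the weighted cut size $y'(\partial S_s)$ with the current-weighted conductance of the cut, integrated over $s$.

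The natural choice making this work is one where the comparison loses only an entropy-type factor. We would try $y_e$ proportional to $\sqrt{w_e R_{\mathrm{eff}}(e)}$, i.e.\ the square root of the leverage score, so that $\sum_e y_e^2=n-1$. We would then bound the integral using a potential function $\Ent$, namely the entropy of the distribution of vertex potentials. The heuristic is that each level set can only carry a fraction of the total conductance proportional to how far $s$ is from the extremes. Integrating $ds/\text{(mass beyond $s$)}$ on each side then yields $\log n$ per side, and the two sides give the constant $2$. The key inequality to prove is an entropy inequality of the form $\sum_f |\Pi(e,f)| y_f \le y_e\,(\Ent_{\text{upper}}+\Ent_{\text{lower}})$ with each $\Ent\le \log n$. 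We would establish it by approximating $L^+$ as a limit of graphs in which one cut is contracted at a time, tracking $\Ent$ monotonically.

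The main obstacle is this last step: choosing the Schur vector and the entropy functional so that the level-set integral telescopes to exactly $\log n$ per side. If leverage-score weights fail, the fallback is to bound the quadratic form $x^\top|\Pi|x$ directly for arbitrary $x\ge0$. We would write it as $\sum_e x_e\sqrt{w_e}\int x'(\partial S^{(e)}_s)\,ds$ and apply the same coarea/entropy argument uniformly over $e$, using that $\Pi$ is a projection so that $\sum_f \Pi(e,f)^2=\Pi(e,e)\le 1$ controls the total mass.
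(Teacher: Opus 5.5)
Your first step is correct and matches how the paper concludes: $\sum_e\norm{i_e}_1=\one^\top\olPi\one\le m\norm{\olPi}_{2\to2}$. So everything rests on $\norm{\olPi}_{2\to2}\le 2\log n$, and that bound is never actually proved. The level-set heuristic, the unspecified entropy $\Ent$ of a ``distribution of potentials'', and ``contracting one cut at a time while tracking $\Ent$ monotonically'' do not add up to an argument, and nothing in the sketch produces a $\log n$.

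Moreover, the concrete Schur vector you propose provably fails. Join $u,v$ by an edge $e$ and by $\ell\ge 2$ internally disjoint paths of length $\ell$, so $n=\ell^2-\ell+2$.
\begin{itemize}
\item The unit $u$--$v$ current puts $1/2$ on $e$ and $1/(2\ell)$ on each of the $\ell^2$ path edges, so $\norm{i_e}_1=(1+\ell)/2$.
\item The leverage of $e$ is $1/2$. Every path edge $f$ has leverage at least $1/2$, because its endpoints are at resistance at least $\ell-1\ge 1$ in $G\setminus f$.
\item With $y_f=\sqrt{\Pi(f,f)}$ you therefore have $y_f\ge y_e$ for all $f$. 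The row of $e$ gives $\sum_f\abs{\Pi(e,f)}y_f\ge y_e\norm{i_e}_1$, which forces $\lambda\ge(1+\ell)/2\approx\sqrt n/2$.
\end{itemize}
In fact, any Schur vector certifying $\lambda=O(\log n)$ on this graph must give the path edges average weight $O(\lambda y_e/\ell)$, far below their leverage scores. Your fallback uses only the projection property $\sum_f\Pi(e,f)^2=\Pi(e,e)\le 1$, and that gives nothing better than the Cauchy--Schwarz $\sqrt m$-type bound.

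The missing idea is to let the reference measure depend on the test vector. Your fallback, bounding the quadratic form for each $w\ge 0$ separately, is the right shape, but the paper's key ingredient is a $w$-dependent diffusion:
\begin{itemize}
\item It sets $\mu_w(x)\propto\sum_{e\ni x}w_e^2$ and writes $\Pi=\int_0^\infty C^{1/2}BH_tB^\top C^{1/2}\,dt$, where $H_t=P_tM^{-1}$ is the heat kernel of the walk reversible with respect to $\mu_w$.
\item Using $H_{2s}=H_sMH_s$ and the triangle inequality, it gets $w^\top\abs{C^{1/2}BH_{2s}B^\top C^{1/2}}w\le\sum_v\mu_w(v)\bigl(w^\top C^{1/2}\abs{BH_s\one_v}\bigr)^2$.
\item A logarithmic-mean Cauchy--Schwarz bounds each square by $\norm{w}_2^2\,\mathcal{I}(H_s\one_v)$. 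This is exactly where the choice of $\mu_w$ enters, via $\mu_w^\top H_s\one_v=1$.
\item The de Bruijn-type identity $\int_0^\infty\mathcal{I}(H_s\one_v)\,ds=-\log\mu_w(v)$ then sums to $2\norm{w}_2^2H(\mu_w)\le 2\norm{w}_2^2\log n$.
\end{itemize}
The entropy is that of the vertex measure $\mu_w$, not of a potential distribution. The constant $2$ comes from the substitution $t=2s$, not from two sides of a level-set integral.
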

\noindent All logarithms in this paper are natural. This bound is tight up to the value of the constant $2$, see \cite[Examples 1.2 and 1.3]{SRS18}. We do not know if the constant $2$ is best possible.

Henceforth, we fix an arbitrary orientation for each edge $e\in E.$ This choice will not affect any of our results. The matrix $K(f, e)=i_e(f)$ is known as the {\bf transfer-current} matrix. When $G$ is unweighted it is an orthogonal projection matrix and is hence symmetric (this fact is known as the \emph{reciprocity law}), see \citet[Section 2.4]{LyonsPeres}. Hence, when $G$ is unweighted $\norm{K}_{2\to2}=1$ where $\norm{\cdot}_{2\to2}$ is the spectral norm. 
Let $\overline{K}$ be the entry-wise absolute value of $K$. The following theorem implies \cref{thm:main_unweighted}. 

\begin{restatable}{theorem}{maintransfercurrentunweighted}\label{thm:main_transfercurrent_unweighted}
For every unweighted graph $G$ with $m$ edges and $n$ vertices,  we have 
$$ \norm{\overline{K}}_{2\to2} \leq 2 \log n \, .$$
\end{restatable}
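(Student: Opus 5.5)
The plan is to write the transfer-current matrix through the heat semigroup and to control the entry-wise absolute value separately in each time window. Let $B$ be the $m\times n$ signed incidence matrix and $L=B^{T}B$ the Laplacian, so that $K=BL^{+}B^{T}$. We may assume $G$ is connected, since $K$ is block diagonal over components. Then
\[
K(f,e)=\int_0^\infty b_f^{T}e^{-tL}b_e\,dt ,
\]
because $b_e\perp\one$. Writing $M_t(f,e)=\abs{b_f^{T}e^{-tL}b_e}$, this gives the entry-wise bound $\overline{K}\le\int_0^\infty M_t\,dt$, and hence $\norm{\overline{K}}_{2\to2}\le\int_0^\infty\norm{M_t}_{2\to2}\,dt$. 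The proof then reduces to bounding $\norm{M_t}$ in three regimes: small $t$, intermediate $t$, and large $t$.

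\emph{Small $t$.} The heat kernel $P_t=e^{-tL}$ is entrywise nonnegative, symmetric and doubly stochastic. Expanding $b_f^{T}P_tb_e$ into its four kernel terms gives $M_t\le \abs{B}P_t\abs{B}^{T}$ entry-wise. Therefore
\[
\norm{M_t}\le\norm{P_t^{1/2}QP_t^{1/2}}\le\norm{Q}\le 2d_{\max},
\]
where $Q=\abs{B}^{T}\abs{B}=D+A$ is the signless Laplacian. This estimate alone is too weak, since $d_{\max}$ can be large. I would refine it by a Schur test with the row and column weights $\sqrt{d}$, which should give a $t$-independent bound of order $1$ on a window of length about $1/d$. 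Equivalently, I would use the trivial bound $\norm{K}\le1$ together with a comparison showing that the first window contributes $O(1)$.

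\emph{Intermediate $t$ (the key step).} The goal is a bound of the form
\[
\norm{M_t}_{2\to2}\le \frac{1}{t}\quad\text{(or $c/t$ with $c$ small)},
\]
which integrated over $t\in[t_0,T]$ produces $\log(T/t_0)$. Without absolute values, $\norm{Be^{-tL}B^{T}}=\max_\lambda \lambda e^{-t\lambda}\le 1/(et)$. With absolute values, I would proceed as follows.
\begin{enumerate}[label=(\roman*)]
\item Write $b_f^{T}P_tb_e=\langle P_{t/2}b_f,P_{t/2}b_e\rangle$.
\item Run a Schur test with positive weights built from the heat kernel. Row $e$ of $M_t$ sums to $\norm{BP_tb_e}_1$, the $\ell_1$ norm of the gradient of the heat flow started from the dipole $b_e$.
\item Bound that $\ell_1$ gradient norm through the time derivative of an entropy or concave functional $\Ent$ of the heat flow. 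The decay of such a functional along the flow should give exactly a $1/t$ rate.
\end{enumerate}
I expect step (iii) to be the main obstacle: it requires an $\ell_1$-type inequality along the semigroup that does not lose a factor of $\sqrt{m}$. What I would try first is to bound $\abs{\nabla u}$ on each edge by $\sqrt{(u(x)+u(y))\,\abs{\nabla\log u}\,\abs{\nabla u}}$ and then apply Cauchy–Schwarz against the Fisher-information identity $\frac{d}{dt}\Ent(P_t\mu)=-\sum_{xy}\nabla P_t\mu\,\nabla\log P_t\mu$. This needs the positive and negative parts of the dipole to be handled separately.

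\emph{Large $t$.} Here I would bound the tail $\int_T^\infty M_t\,dt$ crudely by a Frobenius norm: $\norm{M_t}_F^2=\operatorname{tr}(L^2e^{-2tL})$, which is exponentially small once $T\gg 1/\lambda_2$. For a connected unweighted graph $\lambda_2\gtrsim n^{-2}$, so this tail is negligible for $T$ polynomial in $n$.

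The cutoffs then give roughly $\log(Td_{\max})$. Obtaining exactly $2\log n$ will require choosing the window endpoints as $t_0\approx 1/n$ and $T\approx n$ at the level of the entropy argument itself, which is more careful than the crude spectral-gap cutoff. I would track this by bounding $\Ent$ at time $0$ by $\log n$ (the entropy of a point mass relative to uniform) and letting the total entropy budget $\log n$, spent once for each of the two endpoints of the dipole, give the constant $2$. Concretely, I would replace the three-window split by integrating the $1/t$-type bound against $-\frac{d}{dt}\Ent$, so that the integral of the entropy loss is at most $\log n$ per endpoint, for a total of $2\log n$.
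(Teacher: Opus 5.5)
\textbf{There is a genuine gap, and it is in the step you call the key one.} The pointwise bound $\norm{M_t}_{2\to2}\le c/t$ is false in general, so step (iii) cannot deliver it. Take a $3$-regular graph of girth $g\asymp\log n$ (such graphs exist) and an edge $e=uv$ with $b_e=\one_v-\one_u$. For $1\ll t\le g/16$, both walks stay inside a ball that induces a tree, up to $e^{-\Omega(g)}$ errors. On the $3$-regular tree, the walk from $v$ ends on $v$'s side of $e$ with probability $2/3$, and the walk from $u$ does so with probability $1/3$. Hence $\norm{(P_tb_e)_+}_1\ge 1/3-o(1)$. Since $\abs{f_+(x)-f_+(y)}\le\abs{f(x)-f(y)}$, and vertex sets inside a tree-like ball satisfy $\abs{\partial S}\ge\abs{S}$, the co-area formula gives $\norm{BP_tb_e}_1\ge 1/3-o(1)$. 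This is the row sum of $M_t$ at $e$. Because $\norm{M_t}_{2\to2}\ge m^{-1}\one^\top M_t\one$, the norm $\norm{M_t}_{2\to2}$ stays bounded below up to $t\asymp\log n$, where $c/t\to0$. In locally tree-like (non-amenable) geometry, the $\ell_1$ gradient of the heat flow does not decay; only its time integral is controlled. Entropy dissipation is exactly such an integrated identity: $\int_0^\infty\mathcal{I}(h_s)\,ds$ equals the initial relative entropy, and no $1/t$ rate is hidden in it. Your closing suggestion to ``integrate the $1/t$ bound against $-\frac{d}{dt}\Ent$'' therefore has nothing to integrate, and the window and cutoff scheme cannot be rescued this way.

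\textbf{What is missing} is the structure that lets the entropy identity be applied directly, with no time windows and no norm inside the integral. Fix $w\ge0$; this suffices because $\overline K$ is symmetric and nonnegative. Replace $e^{-tL}$ by the semigroup that is reversible for the $w$-dependent measure $\mu_w(x)=\frac{1}{2\norm{w}_2^2}\sum_{e\ni x}w_e^2$. With $M=\diag(\mu_w)$ and $H_t=e^{-tM^{-1}L}M^{-1}$, one still has $K=\int_0^\infty BH_tB^\top\,dt$. Note that the semigroup depends on $w$, so there is no single family $M_t$ whose norm you could integrate.

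Your step (i) is the right start, but expand the inner product over vertices rather than taking absolute values of whole dipole rows. The identity $H_{2s}=H_sMH_s=\sum_v\mu_w(v)H_s\one_v\one_v^\top H_s$ gives
\[
w^\top\abs{BH_{2s}B^\top}w\le\sum_v\mu_w(v)\bigl(w^\top\abs{BH_s\one_v}\bigr)^2 .
\]
Now only the positive functions $h=H_s\one_v$ appear, which disposes of the sign problem you flagged. Your logarithmic-mean Cauchy--Schwarz, with the sharp bound $\Lambda(a,b)\le\frac{a+b}{2}$ (without the $\frac12$ you lose a factor $2$), gives
\[
\bigl(w^\top\abs{Bh}\bigr)^2\le\frac{\mathcal{I}(h)}{2}\sum_xh(x)\sum_{e\ni x}w_e^2=\norm{w}_2^2\,\mathcal{I}(h)\,\mu_w^\top H_s\one_v=\norm{w}_2^2\,\mathcal{I}(h).
\]
The last equality is precisely where the choice of $\mu_w$ is needed. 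With your uniform semigroup, the factor would be $\sum_xP_s(x,v)\sum_{e\ni x}w_e^2$, which depends on $s$ and $v$ and spoils the next step. Finally, $\int_0^\infty\mathcal{I}(H_s\one_v)\,ds=-\log\mu_w(v)$. Averaging over $v\sim\mu_w$ and using $dt=2\,ds$ gives $w^\top\overline Kw\le 2\norm{w}_2^2H(\mu_w)\le 2\norm{w}_2^2\log n$. The constant $2$ comes from this time doubling, not from one entropy budget per endpoint of the dipole.
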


\medskip
\noindent This improves an $O(\log^2 n)$ bound from \cite[Theorem 1.5]{SRS18}. The natural generalization of this theorem to weighted networks is presented below.

\subsection{Weighted networks} 
If the network contains non-unit conductance edges, then the transfer-current matrix $K$ need not be symmetric and its spectral norm may be arbitrarily close to $\sqrt{m}$. For example, consider the network with two vertices and $m$ parallel edges connecting them with unit edge conductances on $m-1$ edges and very large conductance $M\gg 1$ on a single special edge. It is a straightforward calculation to show that $\norm{\overline{K}}_{2\to2}$ approaches $\sqrt{m}$ as $M\to \infty$. It is thus conventional to symmetrize $K$.

Towards the goal of stating our results, we introduce some standard notation. We represent our electric network by the graph $G=(V,E, \{c_e\}_{e\in E})$, where each edge is oriented arbitrarily and $c_e$ are positive edge conductances. Denote $n=|V|$ and $m=|E|$. Let $B$ be the $m \times n$ signed edge-to-vertex incidence matrix and $C=\diag(c_e)$ be the diagonal $m\times m$ conductance matrix. The Laplacian of $G$ is $L=B^\top C B$ and we denote by $L^+$ its Moore--Penrose pseudoinverse. The transfer-current matrix is thus $K=CBL^+ B^\top$. A standard way of symmetrizing $K$ when $C \neq I$ is by setting
\[
        \Pi:=C^{1/2}BL^+B^\top C^{1/2} \, .
\]
The matrix $\Pi$ is a symmetric orthogonal projection, hence $\norm{\Pi}_{2\to2}=1$. 
We denote by $\olPi$ its entry-wise absolute value. In \cite[Theorem 1.5]{SRS18}, Schild, Rao and Srivastava prove that for any finite electric network $\norm{\olPi}_{2\to2}=O(\log^2 n)$. Our next result improves this to a tight $O(\log n)$ bound.

\begin{theorem}\label{thm:main_weighted}
        For any graph $G=(V,E, \{c_e\}_{e\in E}),$ with $m$ edges and $n$ vertices, we have
$$ \norm{\olPi}_{2\to2} \leq 2\log n \, .$$
\end{theorem}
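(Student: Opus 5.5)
The plan is to bound $\norm{\olPi}_{2\to2}$ with a weighted Schur test. Since $\olPi$ is symmetric with nonnegative entries, it suffices to find a positive vector $w\in\R^E$ and a constant $\lambda\le 2\log n$ with $\olPi w\le \lambda w$ entrywise. Then $\norm{\olPi}_{2\to2}\le\lambda$ (Schur / Perron--Frobenius).

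First I rewrite the entries electrically. Write $\Pi(f,e)=\sqrt{c_f c_e}\,(b_f^\top L^+ b_e)$ and let $\varphi_e=L^+b_e$ be the potential of a unit current between the endpoints of $e$. Then
\[
\abs{\Pi(f,e)}=\sqrt{c_e/c_f}\,\abs{i_e(f)},\qquad i_e(f)=c_f\,(b_f^\top\varphi_e).
\]
So the row condition becomes a weighted $\ell_1$ bound on a single electrical flow:
\[
\sum_f \sqrt{c_e c_f}\,\abs{b_f^\top\varphi_e}\,w_f\le \lambda\, w_e .
\]
I will try the weight $w_e=\sqrt{c_e}\,a_e$, with $a_e$ built from intrinsic quantities such as the effective resistance or leverage score $\Pi(e,e)$. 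This choice makes the unweighted case ($c\equiv1$) a special case and makes the inequality homogeneous under rescaling conductances.

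The key structural step is the level-set (coarea) decomposition of a potential flow. Order the vertices by the value of $\varphi_e$. An edge $f$ contributes $c_f\abs{b_f^\top\varphi_e}$, which equals $c_f$ times the length of the potential interval it spans. Hence $\sum_f c_f\abs{b_f^\top\varphi_e}\,g_f=\int \big(\sum_{f \text{ crossing level } t} c_f g_f\big)\,dt$. Each level cut carries exactly one unit of net current, and the total potential range equals $R_{\mathrm{eff}}(e)$.

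I would then index the cuts by the number $k$ of vertices below the level. The aim is to show that the cut at position $k$ contributes at most about $1/k+1/(n-k)$ relative to $w_e$. Summing then gives two harmonic sums, which is the source of $2H_n\approx 2\log n$. The entropy-type functional $\Ent$ suggested by the paper's macros would serve here as a potential: a convexity or Jensen step comparing $\log$ of the cut sizes on the two sides of each level.

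The main obstacle is choosing $w$ so that the per-level contribution really telescopes into a harmonic or logarithmic sum uniformly over $e$. A naive choice like $w_e=\sqrt{c_e}$ fails, as the parallel-edge example shows. If the Schur test with explicit weights resists, I will fall back to the dual formulation: bound $y^\top\olPi y=\sum_{e,f}\abs{\langle u_e,u_f\rangle}y_ey_f$ for nonnegative unit $y$, where $u_e=\sqrt{c_e}L^{+/2}b_e$. I would handle the absolute values by splitting each potential into its positive and negative level sets. Each half should contribute $\log n$, matching the constant $2$. Throughout I would rely on the facts already noted before Theorem~\ref{thm:main_weighted}: $\Pi$ is an orthogonal projection, and its diagonal sums to $n-1$ (for connected $G$).
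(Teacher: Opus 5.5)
\textbf{There is a genuine gap: your proposal sets up a reformulation but never supplies the step that would prove the bound.} Your reduction to a Schur test is valid. It is also lossless: by Perron--Frobenius the best weight is the Perron vector of $\olPi$. So it only restates the theorem, and all the content lies in exhibiting weights and proving the row inequality. The proposal does neither.

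Your coarea identity is correct. With $S_t$ the set of vertices whose potential lies below $t$ and $c(\partial S_t)$ the total conductance crossing it, it gives $\norm{i_e}_1=\int_0^{R_{\mathrm{eff}}(e)}c(\partial S_t)\,dt$, with unit net current through every level cut. But nothing you write explains why the levels with $k$ vertices below should contribute $O(1/k+1/(n-k))$ relative to $w_e$. That bound is stated as an aim, not derived. The same is true of the fallback claim that the positive and negative parts of the potential ``each contribute $\log n$''.

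Moreover, the weights you suggest provably fail. Take unit conductances, an edge $e=uv$, and $\sqrt m$ internally disjoint $u$--$v$ paths of length $\sqrt m$.
\begin{itemize}
\item Every edge has leverage score between $1/2$ and $1$.
\item Yet $\norm{i_e}_1\approx\sqrt m/2$, since half the current spreads over about $m$ path edges at $1/(2\sqrt m)$ each.
\item Hence any weights that are mutually comparable up to constants give a Schur ratio of order $\sqrt m$ at $e$. This includes anything built from $\Pi(e,e)$ or $R_{\mathrm{eff}}$.
\item The row condition at $e$ forces $a_e$ to exceed the average weight on the path edges by a factor of order at least $\sqrt m/\log n$.
\end{itemize}
That is a global feature not captured by leverage scores or effective resistances, and the proposal offers no other candidate.

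The missing idea, as in the paper, is to drop the search for a universal weight and bound the quadratic form $w^\top\olPi w$ separately for each test vector $w\ge0$. The key is a reference measure adapted to $w$: set $\mu_w(x)=\frac{1}{2\norm{w}_2^2}\sum_{e\ni x}w_e^2$ and $M=\diag(\mu_w)$. The argument then runs as follows.
\begin{itemize}
\item Write $\Pi=\int_0^\infty C^{1/2}BH_tB^\top C^{1/2}\,dt$ with the heat kernel $H_t=e^{-tM^{-1}L}M^{-1}$, and pull the absolute values inside the integral.
\item Use $H_{2s}=H_sMH_s$ to get $w^\top\olPi w\le 2\int_0^\infty\sum_v\mu_w(v)\bigl(w^\top C^{1/2}\abs{BH_s\one_v}\bigr)^2\,ds$.
\item A Cauchy--Schwarz step with the logarithmic mean bounds each square by $\norm{w}_2^2\,\mathcal I(H_s\one_v)$, where $\mathcal I(h)=h^\top L\log h$. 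The choice of $\mu_w$ is exactly what collapses the weight here, via $\mu_w^\top H_s\one_v=1$.
\item Entropy dissipation gives $\int_0^\infty\mathcal I(H_s\one_v)\,ds=-\log\mu_w(v)$. This holds because along $h_s=H_s\one_v$ one has $\partial_s\Ent_{\mu_w}(h_s)=-\mathcal I(h_s)$, where $\Ent_{\mu_w}(h)=h^\top M\log h$.
\item Summing yields $w^\top\olPi w\le 2H(\mu_w)\norm{w}_2^2\le 2\log n\,\norm{w}_2^2$. This is the spectral bound, since $\olPi$ is symmetric and nonnegative.
\end{itemize}
The $\log n$ is the maximal entropy of $\mu_w$, and the factor $2$ comes from the time change $t=2s$, not from two harmonic sums.
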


The main theorem in this paper proves the following estimate which quickly implies  Theorems \ref{thm:main_unweighted}--\ref{thm:main_weighted}. 
For a probability vector $\mu$ (that is, $\mu$ has non-negative entries summing to $1$) we write $H(\mu)$ for its {\bf entropy}, that is, $H(\mu)=-\sum_x \mu(x) \log(\mu(x))$ (setting as usual $0 \log 0=0$).

\begin{restatable}{theorem}{weightedlocalization}\label{thm:weighted_localization}
Let $G$ be any graph $(V,E, \{c_e\}_{e\in E}),$ with $m$ edges and $n$ vertices, and let $w\in\R^E$ be a non-zero vector. Define the probability vector $\mu_w \in \R^V_{\geq 0}$ by
\begin{equation} \label{eq:mudef}
        \mu_w(x):=\frac{1}{2\norm{w}_2^2}\sum_{e : x \in e}w_e^2 \,.
\end{equation}
Then we have
\[
        \abs{w^\top\olPi w}
        \le 2H(\mu_w)\norm{w}_2^2 \, .
\]
\end{restatable}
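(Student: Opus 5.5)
We first reduce to $w\ge 0$. Since $\olPi$ has nonnegative entries, $\abs{w^\top\olPi w}\le |w|^\top\olPi|w|$, and $\mu_w$ depends only on $w_e^2$. So from now on $w\ge0$ and we must bound $\sum_{e,f}w_ew_f\abs{\Pi(e,f)}$.

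The guiding observation is the identity
\[
\sum_{e}w_e^2\bigl(\log\tfrac1{\mu_w(e^+)}+\log\tfrac1{\mu_w(e^-)}\bigr)=\sum_x \log\tfrac1{\mu_w(x)}\sum_{e\ni x}w_e^2=2\norm{w}_2^2H(\mu_w),
\]
where $e^\pm$ are the endpoints of $e$. This shows that the entropy is exactly the $w_e^2$-average of $-\log\mu_w$ at the endpoints. The plan is therefore a Schur-type, edge-by-edge bound of the form
\begin{equation}\label{eq:plan-rowbound}
\sum_f w_f\abs{\Pi(e,f)}\;\lesssim\; w_e\Bigl(\log\tfrac1{\mu_w(e^+)}+\log\tfrac1{\mu_w(e^-)}\Bigr).
\end{equation}
Multiplying by $w_e$ and summing over $e$ then gives the theorem by the identity above. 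If \eqref{eq:plan-rowbound} is too strong row-by-row, the fallback is to prove it only after symmetrizing over the pair $(e,f)$. Concretely, split $w_ew_f\abs{\Pi(e,f)}\le \tfrac12\abs{\Pi(e,f)}\bigl(w_e^2\alpha_{ef}+w_f^2/\alpha_{ef}\bigr)$ with $\alpha_{ef}$ a ratio of the vertex masses near $e$ and near $f$, so that only the sum over all pairs needs to produce the logarithms.

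To get \eqref{eq:plan-rowbound}, write $\Pi(e,f)=\sqrt{c_ec_f}\,(u_e(f^+)-u_e(f^-))$, where $u_e=L^+\chi_e$ is the potential of a unit current across $e$. Then $\sqrt{c_e}\sum_f w_f\sqrt{c_f}\abs{\Delta_fu_e}$ becomes a weighted $\ell_1$ norm of the current $i_e$. We use the coarea (level-set) decomposition
\[
c_f\abs{\Delta_f u_e}=\int \one[t\text{ between }u_e(f^-),u_e(f^+)]\,c_f\,dt .
\]
For a potential flow, every level cut $\{u_e>t\}$ carries exactly unit current, all in the same direction. The weights $w_f/\sqrt{c_f}$ are then controlled on each level cut by the $\mu_w$-mass of the vertices on the cut. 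Integrating over $t$, the contribution of a level $t$ should be at most $d\mu/\mu$-like, where $\mu$ is the $\mu_w$-mass of the super- (or sub-) level set. This produces $\int_{\mu_w(e^+)}^{1}ds/s=\log(1/\mu_w(e^+))$ from the top and $\log(1/\mu_w(e^-))$ from the bottom. The factor $2$ in the statement is meant to come from these two sides and from the symmetrization.

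The main obstacle is the step ``control on each level cut.'' It requires an inequality comparing $\sum_{f\in\partial S_t}w_f\sqrt{c_f}\abs{\Delta_f u_e}$ with the growth rate of $\mu_w(S_t)$ as $t$ varies. Cauchy--Schwarz on each cut brings in $\sum_{f\in\partial S_t} w_f^2$, which is bounded by the derivative of the vertex mass, together with the energy of $u_e$ across the cut. I would first try to prove this for the unit-weight case and then transfer it to conductances by the rescaling $w_f\mapsto w_f/\sqrt{c_f}$ built into $\Pi$. If the level-set route does not close, the alternative is an interpolation argument: deform $w$ (or the conductances) continuously from a point mass, where the bound is trivial, and show that $\tfrac{d}{ds}\,w^\top\olPi w\le 2\,\tfrac{d}{ds}\bigl(H(\mu_w)\norm{w}_2^2\bigr)$.
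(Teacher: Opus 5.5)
Your reduction to $w\ge0$ and the identity $\sum_e w_e^2\ell_e=2\norm{w}_2^2H(\mu_w)$, with $\ell_e:=\log\frac{1}{\mu_w(e^+)}+\log\frac{1}{\mu_w(e^-)}$, are correct. However, the proposal stops exactly where the difficulty lies.

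\textbf{The row bound fails.} Your primary plan, the row bound $\sum_f w_f\abs{\Pi(e,f)}\le K\,w_e\ell_e$, is false for every constant $K$, because its left side does not scale with $w_e$. Take the unit-conductance triangle with edges $e,f,g$, where $\abs{\Pi(e,\cdot)}=(2/3,1/3,1/3)$, and set $w_e=\epsilon$, $w_f=w_g=1$. Then the left side is at least $2/3$, while $\mu_w(e^\pm)\to1/4$, so the right side is $O(\epsilon)$. Hence no analysis of the single potential $u_e=L^+b_e$ that returns $w_e$ times endpoint data of $e$ can succeed.

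\textbf{The fallbacks are unspecified.} Your symmetrized fallback with $\alpha_{ef}=p_f/p_e$ is a legitimate weighted Schur test: it reduces the statement to finding $p>0$ with $(\olPi p)_e\le\ell_e p_e$ for all $e$. But you never specify $p$, and this reformulation is equivalent to the theorem rather than a reduction of it. Indeed, applying the theorem to every $v$ and using Gibbs' inequality $\sum_x\mu_v(x)\log\frac{1}{\mu_w(x)}\ge H(\mu_v)$ gives $\olPi\preceq\diag(\ell)$, and Perron--Frobenius then yields such a $p$. The interpolation alternative is likewise only named, not carried out.

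\textbf{The level-cut step has no inequality.} You flag this step yourself as the main obstacle, and the heuristic does not fit the quantities involved. The set $S_t=\{u_e>t\}$, and hence $\mu_w(S_t)$, is constant between consecutive vertex potentials. So the growth of $\mu_w(S_t)$ consists of atoms, whereas $\sum_{f\in\partial S_t}w_f\sqrt{c_f}$ is integrated against $dt$ over the potential gaps. The factor $\sum_{f\in\partial S_t}w_f^2$ produced by Cauchy--Schwarz is controlled by the $\mu_w$-mass of the vertices incident to the cut, not by the increment of $\mu_w(S_t)$. Turning this into a $d\mu/\mu$ bound would need a quantitative relation between potential gaps of $u_e$ and $\mu_w$-masses, which a unit-current potential does not encode.

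\textbf{The missing idea.} The paper uses a diffusion adapted to $\mu_w$ and decomposes through vertices rather than edges:
\begin{itemize}
\item It writes $\Pi=\int_0^\infty C^{1/2}BH_tB^\top C^{1/2}\,dt$ with $H_t=\exp(-tM^{-1}L)M^{-1}$ and $M=\diag(\mu_w)$.
\item It uses $H_{2s}=H_sMH_s$ and the triangle inequality to bound the form by $2\int_0^\infty\sum_v\mu_w(v)\bigl(w^\top C^{1/2}\abs{BH_s\one_v}\bigr)^2\,ds$.
\item It bounds each square by $\norm{w}_2^2\,\mathcal{I}(H_s\one_v)$, via Cauchy--Schwarz with the logarithmic mean and the normalization $\mu_w^\top H_s\one_v=1$.
\item It integrates the Fisher information by entropy dissipation, $\int_0^\infty\mathcal{I}(H_s\one_v)\,ds=-\log\mu_w(v)$.
\end{itemize}
The logarithms thus arise as relative entropies of point masses under the $\mu_w$-speed heat flow, not from level sets of individual edge potentials.
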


The original localization theorem is due to Schild, Rao, and Srivastava
\cite{SRS18}.  They proved an $O(\log^2 n)$ bound for the spectral norm of the
entrywise absolute transfer-current matrix and derived the corresponding
$O(\log^2 n)$ average-flow-length bound for unweighted graphs.

Several later works use localization-type statements or closely related
Schur-complement and sparsification primitives.  The almost-linear time
algorithm of \citet{Sch18} for sampling random spanning trees uses Laplacian solvers,
Schur complements, and structural facts about electrical flows.  \citet{LPY25} explicitly use the electrical-flow localization theorem of
Schild--Rao--Srivastava to find uncorrelated edge sets for approximate
spanning-tree counting.  \citet{FGLPSY21} develop minor-based vertex sparsifiers and a distributed
Laplacian paradigm.
\citet{AHGLZ22} refine the resulting distributed Laplacian solvers using
low-congestion shortcuts.  Finally, the spectral subspace sparsification
framework of \citet{LS18} generalizes Schur-complement sparsifiers.
In all above results, our improved localization result improves the sparsification primitives,
though in downstream applications it propagates only as hidden-polylogarithmic improvements.

\section{Preliminaries}\label{sec:preliminaries}

Throughout this note, we let $G=(V, E, c)$ denote a finite undirected graph on vertices $V$, edges $E$, and edge conductances $c_e > 0$. We may assume without loss of generality that the graph is connected and has no loops. We denote $n := |V|$ and $m := |E|$. 
We assign an arbitrary orientation to the edges of $E$.
For a vertex $v\in V$, let $\one_v\in\R^V$ be the standard basis vector
corresponding to $v$.  If an edge $e$ is oriented from $e^-$ to $e^+$, define
\[
        b_e := \one_{e^+}-\one_{e^-}.
\]
Let $B\in\R^{E\times V}$ be the matrix whose row indexed by $e$ is $b_e^\top$.
Let $C=\diag(c_e)$ and let
\[
        L:=B^\top C B \, 
\]
be the Laplacian. It is a symmetric positive semi-definite matrix with one-dimensional kernel spanned by the constant functions. It will be convenient to diagonalize $L$ with respect to an arbitrary positive definite inner product. 
Given $M$, a positive diagonal $n\times n$ matrix, define $S=M^{-1/2} L M^{-1/2}$. Then $S$ is a symmetric positive semi-definite matrix which has a one-dimensional kernel spanned by $M^{1/2}\one$. We write $0=\lambda_1 < \lambda_2 \leq \lambda_3 \ldots \leq \lambda_n$ for the eigenvalues of $S$ and by $\{\psi_i\}_{i=1}^n$ an orthonormal basis of eigenvectors. So $S = \sum_{i=1}^n \lambda_i \psi_i \psi_i^\top$ whence $L = \sum_{i=1}^n \lambda_i M^{1/2} \psi_i \psi_i^\top M^{1/2}$. The Moore--Penrose pseudoinverse of $L$ is denoted by $L^+$, that is, it is the unique symmetric matrix satisfying 
$LL^+=L^+L=I-{\frac{1}{n}} J$ where $J$ is the all one matrix, and $L^+\one=0$. Analogously, $S^+ = \sum_{i=2}^n \frac{1}{\lambda_i} \psi_i \psi_i^\top.$

Let $\Gamma:=M^{-1/2}S^+M^{-1/2}$. For an edge vector $b_e$, the vector $M^{-1/2}b_e$ is orthogonal to $\ker S=\operatorname{span}\{M^{1/2}\one\}$, and hence $S S^+M^{-1/2}b_e=M^{-1/2}b_e$, and so $L\Gamma b_e=b_e$. Also $LL^+b_e=b_e$. Thus $\Gamma b_e-L^+b_e\in\ker L=\operatorname{span}\{\one\}$, and applying $B$ gives $B\Gamma b_e=BL^+b_e$. This holds for every column $b_e$ of $B^\top$, proving 
\begin{equation} BL^+B^\top = \sum_{i=2}^n \frac{1}{\lambda_i} B M^{-1/2} \psi_i \psi_i^\top M^{-1/2}B^\top \, . \label{eq:L+}
\end{equation}

\subsection{The heat kernel}\label{sec:heat_kernel}
We restrict our attention to positive vectors $w\in \R^E_{>0}$. For such $w$ let $\mu_w \in \R^V_{>0}$ be the probability vector defined in \eqref{eq:mudef}. 
Let $M:=\diag(\mu_w)$ and consider the heat-kernel 
\[
        P_t:=\exp(-tM^{-1}L) = \sum_{k=0}^\infty \frac{(-tM^{-1}L)^k}{k!} \, , \qquad t\ge0 \, .         \]
In other words, $P_t(x,y)$ is the transition probability of the continuous-time random walk that jumps from $x$ to $y$ at rate $\frac{c_{xy}}{\mu_w(x)}$, see \citet{Norris1997}. It satisfies the semigroup property $P_{t+s} = P_t P_s$ and while it is not symmetric, it satisfies the detailed balance condition $\mu_w(x) P_t(x,y) = \mu_w(y)P_t(y,x)$. Hence the matrix 
\begin{equation}\label{eq:Ht}
        H_t:= P_t M^{-1}   
\end{equation}
is symmetric and satisfies $H_{t+s}=H_t M H_s$. Indeed, $\mu_w^\top P_t=\mu_w^\top$ since $\mu_w^\top M^{-1}L=\one^\top L=0$, while reversibility gives $MP_t=P_t^\top M$; hence $H_t^\top=H_t$, $H_tMH_s=H_{t+s}$, and $\mu_w^\top H_t=\one^\top$. Let $S=M^{-1/2} L M^{-1/2}$, then $M^{-1} L = M^{-1/2} S M^{1/2}$, and we deduce that $P_t = M^{-1/2} e^{-tS} M^{1/2}$ and so $H_t=M^{-1/2} e^{-tS} M^{-1/2}$ from which we deduce by the previous discussion that 
$$ H_t = \sum_{i=1}^n e^{-t\lambda_i} M^{-1/2} \psi_i \psi_i^\top M^{-1/2} \, ,$$
where  $\{\psi_i\}_{i=1}^n$ is an orthonormal basis of eigenvectors of $S$. The first vector $\psi_1$ is a constant multiple of $M^{1/2} \one$ hence the $i=1$ term satisfies $B M^{-1/2} \psi_1 \psi_1^\top M^{-1/2} B^\top = 0$ and so 
$$ BH_t B^{\top} = \sum_{i=2}^n e^{-t\lambda_i} B M^{-1/2} \psi_i \psi_i^\top M^{-1/2} B^\top \, .$$
Since $\int_{0}^{\infty} e^{-t\lambda}dt = \frac{1}{\lambda}$ for any $\lambda>0$, using \eqref{eq:L+} we obtain that 
\begin{equation} \label{eq:green_representation}
    BL^+B^\top = \int_0^\infty BH_tB^\top dt \, .
\end{equation}

\section{Entropy dissipation}

Given a positive vector $h \in \R^V_{>0}$, define 
$$ \mathcal{I}(h) := h^\top L\log h = \sum_{xy\in E} c_{xy} (h(x)-h(y)) ( \log h(x)-\log h(y)) \, .$$
The quantity $\mathcal{I}(h)$ is a variant of the Fisher information, see \citet{Bobkov2014}. The following is a discrete analogue, in the spirit of \cite[Proposition 2.2]{Bobkov2014}, bounding the $\ell_1$ variation by the Fisher information.

\begin{lemma}[Logarithmic-mean Cauchy--Schwarz]\label{lem:weighted_log_mean_cs_2}
Given a graph $G=(V,E, c),$  let $h \in \R^V_{>0}$ and $w\in\R_{\ge0}^E$. Then
$$ \left(w^\top C^{1/2}\abs{Bh}\right)^2 \leq \frac{\mathcal{I}(h)}{2} \sum_{x} h(x) \sum_{e : x\in e} w_e^2 \, ,$$
where $\abs{Bh}$ denotes the coordinatewise absolute value of $Bh$.
\end{lemma}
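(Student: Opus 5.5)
We prove the inequality edge by edge with the logarithmic-mean inequality and then sum using Cauchy--Schwarz. For an edge $e=xy$ write $\Lambda(a,b):=\frac{a-b}{\log a-\log b}$ for the logarithmic mean of $a,b>0$, with $\Lambda(a,a)=a$. By definition,
\[
        |h(x)-h(y)| = \sqrt{(h(x)-h(y))(\log h(x)-\log h(y))}\,\sqrt{\Lambda(h(x),h(y))},
\]
since $(h(x)-h(y))$ and $(\log h(x)-\log h(y))$ have the same sign. When $h(x)=h(y)$ both sides vanish, so this case is harmless.

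Multiplying by $\sqrt{c_e}\,w_e$ and summing over $e$, Cauchy--Schwarz gives
\[
        \left(\sum_e w_e\sqrt{c_e}\,|h(x)-h(y)|\right)^2 \le \left(\sum_{e} c_e (h(x)-h(y))(\log h(x)-\log h(y))\right)\left(\sum_{e=xy} w_e^2\,\Lambda(h(x),h(y))\right).
\]
The first factor is exactly $\mathcal{I}(h)$.

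It remains to bound the second factor. The logarithmic mean is at most the arithmetic mean, $\Lambda(a,b)\le \frac{a+b}{2}$. To check this, put $a=be^{2s}$; the inequality becomes $\frac{e^{2s}-1}{2s}\le \frac{e^{2s}+1}{2}$, which is equivalent to $\tanh s\le s$ for $s\ge0$ (and symmetrically for $s<0$). Hence
\[
        \sum_{e=xy} w_e^2\Lambda(h(x),h(y))\le \frac12\sum_{e=xy} w_e^2\,(h(x)+h(y)) = \frac12\sum_x h(x)\sum_{e: x\in e} w_e^2,
\]
where the last equality regroups the sum by vertex incidences.

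Combining the two displays gives the stated bound. No step is a real obstacle: the only points to handle are the degenerate edges with $h(x)=h(y)$, which contribute zero, and the elementary logarithmic-mean inequality.
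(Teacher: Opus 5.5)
Your proof is correct and follows the paper's argument essentially verbatim: you factor $(Bh)_e^2=\Lambda_e\,(Bh)_e(B\log h)_e$, apply Cauchy--Schwarz to isolate $\mathcal{I}(h)$, bound $\Lambda(a,b)\le\frac{a+b}{2}$, and regroup the sum by vertices. The only cosmetic difference is how you check the logarithmic-mean bound: you use the substitution $a=be^{2s}$ and $\tanh s\le s$, whereas the paper integrates the weighted AM--GM inequality $a^\theta b^{1-\theta}\le\theta a+(1-\theta)b$ over $\theta\in[0,1]$.
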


The proof relies on a classical inequality for the logarithmic mean due to \citet{Carlson72}. We include its proof for completeness.
\begin{lemma}[Logarithmic Mean]\label{lem:log_mean_bounds}
For $a,b>0$, define the logarithmic mean of $a$ and $b$ as
\[
        \Lambda(a,b):=\int_0^1 a^\theta b^{1-\theta}\,d\theta=
        \begin{cases}
        \dfrac{a-b}{\log a-\log b}, & a\ne b,\\[1.2ex]
        a, & a=b.
        \end{cases}
\]
Then,
\[
        \sqrt{ab}\le \Lambda(a,b)\le \frac{a+b}{2}.
\]
\end{lemma}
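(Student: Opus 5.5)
The statement is the logarithmic mean inequality $\sqrt{ab}\le \Lambda(a,b)\le \frac{a+b}{2}$. My plan is to work directly from the integral representation $\Lambda(a,b)=\int_0^1 a^\theta b^{1-\theta}\,d\theta$. First I would check that this integral equals the closed form. For $a\ne b$, write $a^\theta b^{1-\theta}=b\,e^{\theta\log(a/b)}$ and integrate to get $b\cdot\frac{(a/b)-1}{\log(a/b)}=\frac{a-b}{\log a-\log b}$. For $a=b$ the integrand is identically $a$. Everything then reduces to comparing the integral with the two means.

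For the upper bound I would use convexity of $\theta\mapsto a^\theta b^{1-\theta}=\exp(\theta\log a+(1-\theta)\log b)$, which is the exponential of an affine function. Hence $a^\theta b^{1-\theta}\le \theta a+(1-\theta)b$ pointwise; this is the weighted AM--GM inequality. Integrating over $[0,1]$ gives $\Lambda(a,b)\le \frac{a+b}{2}$.

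For the lower bound I would pair $\theta$ with $1-\theta$. By symmetry of the integral under this substitution,
\[
\Lambda(a,b)=\int_0^1 \frac{a^\theta b^{1-\theta}+a^{1-\theta}b^{\theta}}{2}\,d\theta .
\]
By AM--GM the integrand is at least $\sqrt{a^\theta b^{1-\theta}a^{1-\theta}b^\theta}=\sqrt{ab}$. Integrating gives $\Lambda(a,b)\ge\sqrt{ab}$.

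An alternative route for the lower bound is Jensen's inequality applied to the convex function $\exp$: $\int_0^1 e^{\theta\log a+(1-\theta)\log b}\,d\theta\ge e^{\frac12(\log a+\log b)}=\sqrt{ab}$. Neither step is a real obstacle. The only care needed is the degenerate case $a=b$, where the closed form is replaced by $a$; both inequalities then hold with equality.
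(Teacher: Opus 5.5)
Your proof is correct and essentially matches the paper's. The upper bound is the same pointwise weighted AM--GM inequality $a^\theta b^{1-\theta}\le \theta a+(1-\theta)b$ integrated over $[0,1]$, and your alternative lower-bound argument is exactly the paper's Jensen step. Your primary lower-bound route (symmetrizing $\theta\leftrightarrow 1-\theta$ and applying two-term AM--GM pointwise) is a slightly more elementary variant, and your check of the closed form is a detail the paper leaves implicit.
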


\begin{proof}
Jensen's inequality applied to 
the convex function $\exp$ gives
\[
        \int_0^1 a^\theta b^{1-\theta}\,d\theta
        \ge
        \exp\left(\int_0^1(\theta\log a+(1-\theta)\log b)\,d\theta\right)
        =\sqrt{ab}.
\]
For the upper bound, notice that for all $\theta\in[0,1]$, the weighted arithmetic-geometric mean inequality gives
$a^\theta b^{1-\theta}\le \theta a+(1-\theta)b$. 
Integrating over $\theta$ gives $\Lambda(a,b)\le(a+b)/2$.
\end{proof}

\begin{proof}[Proof of \cref{lem:weighted_log_mean_cs_2}]
For each edge $e=xy$, set $\lambda_e:=\Lambda(h(x),h(y))$. By the definition of the logarithmic mean,
\[
        (Bh)_e^2=\lambda_e (Bh)_e(B\log h)_e .
\]
Thus Cauchy--Schwarz gives
\[
\begin{aligned}
        \left(w^\top C^{1/2}\abs{Bh}\right)^2 
        \le
        \left(\sum_{e\in E} w_e^2\lambda_e\right)
        \left((Bh)^\top C B\log h\right)  
        =
        \left(\sum_{e\in E} w_e^2\lambda_e\right)\mathcal{I}(h).
\end{aligned}
\]
Using \cref{lem:log_mean_bounds},
\[
        \sum_{e\in E}w_e^2\lambda_e
        \le
        \frac12\sum_{xy\in E}w_{xy}^2(h(x)+h(y))
        =
        \frac12\sum_x h(x)\sum_{e:x\in e}w_e^2 .
\]
Combining the last two inequalities proves the lemma.
\end{proof}

\begin{lemma}[Entropy dissipation]\label{lem:entropy_dissipation2}
Let $\mu, \rho \in \R^V$ be probability vectors on $V$, with $\mu(v) > 0$ for all $v \in V$.
Set $M=\diag(\mu)$ and $P_s=\exp(-sM^{-1}L)$ as in \cref{sec:heat_kernel}. For all $s \ge 0,$ define
\[
        h_s:=P_sM^{-1}\rho .
\]
Then
\[
        \int_{0}^\infty {\mathcal{I}}(h_s)\,ds
        =
        \sum_{x\in V}\rho(x)\log\frac{\rho(x)}{\mu(x)} \, ,
\]
with the convention $0\log0=0$. For the special case $\rho=\one_v$ we get
\[
        \int_{0}^\infty {\mathcal{I}}(h_s)\,ds
        =-\log \mu(v) \, .
\]
\end{lemma}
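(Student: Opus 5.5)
The plan is to show that the relative entropy of $\rho$ with respect to $\mu$, evaluated along the heat flow, has derivative $-\mathcal{I}(h_s)$, and then integrate. First we rewrite the evolution in density form. Set $\rho_s := \rho^\top P_s$ (as a row vector), so that $\rho_s$ is the law at time $s$ of the continuous-time walk started from $\rho$. By detailed balance $MP_s=P_s^\top M$, hence $P_s^\top = MP_sM^{-1}$. Therefore
\[
\rho_s=P_s^\top\rho=MP_sM^{-1}\rho=Mh_s .
\]
In other words, $h_s=M^{-1}\rho_s$ is the density of $\rho_s$ with respect to $\mu$. The derivative is $\frac{d}{ds}h_s=-M^{-1}Lh_s$. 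Also, $\mu^\top h_s=\one^\top\rho_s=1$ for all $s$.

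Next, define $F(s):=\sum_x \mu(x)h_s(x)\log h_s(x)=\sum_x\rho_s(x)\log\frac{\rho_s(x)}{\mu(x)}$. For $s>0$ we have $h_s>0$ entrywise: $P_s$ has strictly positive entries when $G$ is connected, and $\rho\ne0$. So $F$ is smooth on $(0,\infty)$. Differentiating gives
\[
F'(s)=\sum_x\mu(x)\,\dot h_s(x)\,(\log h_s(x)+1)=-(\log h_s+\one)^\top L h_s=-h_s^\top L\log h_s=-\mathcal{I}(h_s),
\]
using $\one^\top L=0$ and the symmetry of $L$. Integrating from $\varepsilon$ to $T$ yields $\int_\varepsilon^T\mathcal{I}(h_s)\,ds=F(\varepsilon)-F(T)$.

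It remains to handle the two limits. As $T\to\infty$, $P_s\to \one\mu^\top$, because the kernel of $S$ is one-dimensional and $\lambda_2>0$. Hence $h_T\to \one\cdot(\mu^\top M^{-1}\rho)=\one$ and $F(T)\to0$. As $\varepsilon\to0$, $h_\varepsilon\to M^{-1}\rho$, and since $u\mapsto u\log u$ is continuous on $[0,\infty)$ with $0\log0=0$, we get $F(\varepsilon)\to\sum_x\rho(x)\log\frac{\rho(x)}{\mu(x)}$. Since $\mathcal{I}(h_s)\ge0$ (each edge term $(a-b)(\log a-\log b)$ is nonnegative), monotone convergence lets us pass to the improper integral over $(0,\infty)$. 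The special case follows by plugging in $\rho=\one_v$, which gives $\rho(v)\log(1/\mu(v))=-\log\mu(v)$.

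The main obstacle is the boundary behaviour at $s=0$: when $\rho$ has zeros, $h_0$ is not positive, so $\mathcal{I}(h_0)$ is undefined and $\log h_s$ blows up as $s\to0$. The argument above avoids this by working on $[\varepsilon,T]$ and using only continuity of $F$ at $0$ together with nonnegativity of the integrand. The only other points to check are positivity of $P_s$ for $s>0$, which needs the connectivity assumption, and convergence of the heat kernel to equilibrium, which comes from the spectral decomposition of $H_t$ in \cref{sec:heat_kernel}.
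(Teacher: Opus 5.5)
Your proof is correct and follows the same route as the paper: differentiate $\Ent_\mu(h_s)=h_s^\top M\log h_s$ to get $-\mathcal{I}(h_s)$ using $\partial_s h_s=-M^{-1}Lh_s$ and $L\one=0$, then integrate and evaluate the limits at $0$ (continuity of $u\log u$) and at $\infty$ (convergence $h_s\to\one$). Your truncation to $[\varepsilon,T]$ together with monotone convergence via $\mathcal{I}(h_s)\ge 0$ simply makes explicit the boundary handling at $s=0$ that the paper leaves implicit.
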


\begin{proof} This is a consequence of de Bruijn's identity relating the derivative of the entropy produced in a diffusion to its Fisher information. 
We treat $\rho$ and $h_s$ as column vectors. Then $\partial_s h_s=-M^{-1}Lh_s$. 

For $s>0$ all entries of $h_s$ are positive since $G$ is connected, so with \[
        \Ent_\mu(h_s):=h_s^\top M\log h_s
        =\mathbb{E}_{x\sim\mu}\left[h_s(x)\log h_s(x)\right]
\]
we differentiate to get 
\begin{eqnarray} \label{diff_eq}
         \partial_s \Ent_\mu(h_s) = (\log h_s+\one)^\top M \partial_s  h_s = -(\log h_s+\one)^\top Lh_s   = -h_s^\top L\log h_s
         = -\mathcal{I}(h_s) \, ,
\end{eqnarray}
where the $\one$ term vanishes because $L\one=0.$
Since $G$ is connected, $\lim_{s\to \infty} h_s(x)=1$ for all $x$, and hence $\lim_{s\to \infty}\Ent_\mu(h_s)=0$. Also by continuity of $u \log u$ on $[0,\infty),$
\[
        \lim_{s \to 0}\Ent_\mu(h_s)=(M^{-1}\rho)^\top M\log(M^{-1}\rho)
        =
        \sum_{x\in V}\rho(x)\log\frac{\rho(x)}{\mu(x)}\, .
\]
Integrating (\ref{diff_eq}) proves the lemma.     
\end{proof}

In the application below, we set $\mu=\mu_w$ and $\rho=\one_v$.

\begin{lemma}[Heat-kernel variation estimate]\label{lem:heat_l1_estimate}
Let $w\in\R^E_{>0}$ and set $\mu_w$ as in \eqref{eq:mudef}. Let $H_t$ be the heat kernel defined in \eqref{eq:Ht}.  Then
\begin{align*}
        \int_0^\infty w^\top
        \abs{C^{1/2} B H_t B^\top C^{1/2}}w \,dt
        \le 2\norm{w}_2^2 H(\mu_w) \, ,
\end{align*}
where the absolute value of a matrix is taken entrywise.    
\end{lemma}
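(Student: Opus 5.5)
The plan is to exploit the semigroup identity $H_t=H_{t/2}MH_{t/2}$ so that each entry of $BH_tB^\top$ becomes a $\mu_w$-weighted inner product of two "half-time" vectors. The double sum over $e,f$ then collapses into a perfect square at each vertex, to which \cref{lem:weighted_log_mean_cs_2} applies directly, and \cref{lem:entropy_dissipation2} integrates the result in time. Throughout, write $\mu=\mu_w$, $M=\diag(\mu)$, and for $x\in V$ let $h^x_s:=P_sM^{-1}\one_x=H_s\one_x$, which is exactly the $h_s$ of \cref{lem:entropy_dissipation2} with $\rho=\one_x$.

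\textbf{Step 1 (splitting).} Since $H_t$ is symmetric and $H_t=H_{t/2}MH_{t/2}$, we have
\[
b_e^\top H_t b_f=\sum_{x\in V}\mu(x)\,(b_e^\top H_{t/2}\one_x)(b_f^\top H_{t/2}\one_x)=\sum_{x}\mu(x)\,(Bh^x_{t/2})_e\,(Bh^x_{t/2})_f .
\]
Taking absolute values entrywise, applying the triangle inequality and using $w\ge0$ gives
\[
w^\top\abs{C^{1/2}BH_tB^\top C^{1/2}}w\le\sum_{x}\mu(x)\Big(w^\top C^{1/2}\abs{Bh^x_{t/2}}\Big)^2 .
\]

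\textbf{Step 2 (log-mean Cauchy--Schwarz).} Apply \cref{lem:weighted_log_mean_cs_2} with $h=h^x_{t/2}$, which is positive for $t>0$. The second factor there is
\[
\sum_y h(y)\sum_{e\ni y}w_e^2=2\norm{w}_2^2\,\mu^\top h^x_{t/2}.
\]
Moreover, $\mu^\top P_s=\mu^\top$ gives $\mu^\top h^x_{t/2}=\mu^\top M^{-1}\one_x=1$. Hence
\[
\Big(w^\top C^{1/2}\abs{Bh^x_{t/2}}\Big)^2\le\norm{w}_2^2\,\mathcal I(h^x_{t/2}).
\]

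\textbf{Step 3 (integrating in time).} Integrate over $t\in(0,\infty)$ and substitute $s=t/2$, which contributes the factor $2$:
\[
\int_0^\infty w^\top\abs{C^{1/2}BH_tB^\top C^{1/2}}w\,dt\le2\norm{w}_2^2\sum_x\mu(x)\int_0^\infty\mathcal I(h^x_s)\,ds .
\]
By \cref{lem:entropy_dissipation2} with $\rho=\one_x$, the inner integral equals $-\log\mu(x)$. The right-hand side is therefore $2\norm{w}_2^2H(\mu_w)$, as claimed.

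The main obstacle is Step 1. A direct approach bounds $\abs{b_e^\top H_tb_f}$ through the columns $h^{f^\pm}_t$ and then applies \cref{lem:weighted_log_mean_cs_2}. That only produces $\sqrt{\mathcal I}$ at each time, and $\int_0^\infty\sqrt{\mathcal I(h_t)}\,dt$ cannot be controlled by the entropy identity. The half-time split is what produces a \emph{square}, which Cauchy--Schwarz turns into $\mathcal I$ itself, at the cost of only the factor $2$. The remaining points are routine: positivity of $h^x_s$ for $s>0$, and justifying the interchange of the sum with the improper integral, which follows by monotone convergence since all terms are nonnegative.
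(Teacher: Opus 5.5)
Your proposal is correct and follows the paper's proof essentially step for step. It uses the half-time split $H_{2s}=H_sMH_s$ and the triangle inequality to reduce to $\sum_v \mu_w(v)\bigl(w^\top C^{1/2}\abs{BH_s\one_v}\bigr)^2$, then \cref{lem:weighted_log_mean_cs_2} with $\mu_w^\top H_s\one_v=1$, then \cref{lem:entropy_dissipation2} with $\rho=\one_v$, where the time change supplies the factor $2$; your monotone-convergence treatment of the $s\to0$ endpoint is an adequate substitute for the paper's $\epsilon\downarrow0$ truncation.
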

\begin{proof}
Setting $t=2s,$ and using $H_{2s}=H_s M H_s$ (see \cref{sec:heat_kernel}), we have that 
\begin{align*}
w^\top \abs{C^{1/2} B H_s M H_s B^\top C^{1/2}} w 
& = w^\top \abs{ C^{1/2} B H_s \left(\sum_{v \in V} \mu_w(v) \one_v \one_v^\top \right) H_s B^\top C^{1/2}} w \\
& \le \sum_{v \in V} \mu_w(v) \left(w^{\top}C^{1/2} \abs{BH_s  \one_v}\right)^2,
\end{align*}
where we used the triangle inequality and  that $\mu_w, w$ and $C$ have non-negative entries.

Now we can apply \cref{lem:weighted_log_mean_cs_2}. Note that the lemma assumes a positive vector $H_s\one_v$, which is true for $s > 0.$ The claim below still follows since we can first integrate for $s \in [\epsilon,\infty)$ for $\epsilon > 0$  and then letting $\epsilon \downarrow 0$.
\begin{align*}
\left(w^{\top}C^{1/2} \abs{BH_s  \one_v}\right)^2 
& \leq \frac{\mathcal{I}(H_s \one_v)}{2} \sum_{x} (\one_x^{\top}  H_s \one_v) \sum_{e : x\in e} w_e^2 \\
& = \norm{w}_2^2 \left(\mu_w^\top H_s \one_v\right) \mathcal{I}(H_s \one_v) 
&& \text{(Using the definition of $\mu_w$)} \\
& = \norm{w}_2^2 \mathcal{I}(H_s \one_v) 
&& \text{(Since $\mu_w^{\top}H_s \one_v=1$)}
\end{align*}

Now apply \cref{lem:entropy_dissipation2} with $\rho=\one_v$ to conclude that
\begin{align*}
     \int_0^\infty w^\top
	        \abs{C^{1/2} B H_t B^\top C^{1/2}}w \,dt &= 2\int_0^\infty w^\top  \abs{C^{1/2} B H_{2s} B^\top C^{1/2}} w\, ds \\ 
                & \le 2\norm{w}_2^2 \sum_{v \in V} \mu_w(v)\int_0^\infty  \mathcal{I}(H_s \one_v) ds \\
                &= 2\norm{w}_2^2 \sum_v \mu_w(v)\big(- \log \mu_w(v)\big)  =   2\norm{w}_2^2 H(\mu_w) \, .\end{align*}
\end{proof}

\section{Proof of main theorems}\label{sec:consequences}

\begin{proof}[Proof of \cref{thm:weighted_localization}]
We first assume that $w \in \R^E$ has no zero coordinates and hence $\abs{w} \in \R^E_{>0}$.
From \eqref{eq:green_representation} we have
\[
        \Pi
        =
        \int_0^\infty C^{1/2} B H_t B^\top C^{1/2}\,dt \, .
\]
Applying \cref{lem:heat_l1_estimate} for $\abs{w}$ and using $\norm{\abs{w}}_2^2=\norm{w}_2^2$ and $\mu_{\abs{w}}=\mu_w$ gives
\[
\begin{aligned}
        \abs{w^\top\olPi w} \le 
        \abs{w}^\top\olPi \abs{w}
        &\le
        \int_0^\infty \abs{w}^\top
        \abs{C^{1/2} B H_t B^\top C^{1/2}} \abs{w}\,dt 
        \le 2\norm{w}_2^2 H(\mu_{w}) \, .
\end{aligned}
\]
By continuity, we obtain the statement for all $w \in \R^E.$
\end{proof}

\begin{proof}[Proof of \cref{thm:main_weighted}] This follows immediately from \cref{thm:weighted_localization} using $H(\mu_w)\leq \log n$.
\end{proof}

\begin{proof}[Proof of \cref{thm:main_transfercurrent_unweighted}] This follows immediately from \cref{thm:main_weighted} with $C=I$.
\end{proof}

\begin{proof}[Proof of \cref{thm:main_unweighted}] \cref{thm:main_transfercurrent_unweighted} states that $|x^\top \overline{K} x|\leq  2\log n  \norm{x}_2^2$. 
Taking $x$ to be the all ones vector gives
\[
        \sum_{e,f\in E}|K(e,f)|\le 2m\log n.
\]
Since the $e$-th column of $K$ is the current vector $i_e$, we have
\[
        \sum_{e\in E}\norm{i_e}_1=\sum_{e,f\in E}|K(e,f)|.
\]
Dividing by $m$ gives the desired inequality.
\end{proof}

\printbibliography

\end{document}